\newcolumntype{C}{>{\centering\arraybackslash}X} 
\newtheorem{thm}{Theorem}
\newtheorem{rem}{Remark}
\newtheorem{pos}{Proposition}
\newtheorem{proof}{proof}
\begin{document}

\title{Intelligent Reflecting Surface Aided AirComp: Multi-Timescale Design and Performance Analysis}

\author{Guangji Chen,
        Jun Li,
        Qingqing Wu,
        Meng Hua,
        Kaitao Meng,
        and Zhonghao Lyu \vspace{-22pt}
        \thanks{Guangji Chen and Jun Li are with Nanjing University of Science and Technology, Nanjing 210094, China (email:
                guangjichen@njust.edu.cn; jun.li@njust.edu.cn). Q. Wu is with Shanghai Jiao Tong University, 200240, China
                (e-mail: qingqingwu@sjtu.edu.cn). M. Hua is with Imperial College London, London SW7 2AZ, UK (e-mail: m.hua@imperial.ac.uk). K. Meng is with University College London, London WC1E 7JE, UK (emails: kaitao.meng@ucl.ac.uk). Z. Lyu is with the Chinese University of Hong Kong (Shenzhen), Shenzhen, China (e-mail: zhonghaolyu@link.cuhk.edu.cn).}}

\maketitle
\vspace{-3pt}
\begin{abstract}
The integration of intelligent reflecting surface (IRS) into over-the-air computation (AirComp) is an effective solution for reducing the computational mean squared error (MSE) via its high passive beamforming gain. Prior works on IRS aided AirComp generally rely on the full instantaneous channel state information (I-CSI), which is not applicable to large-scale systems due to its heavy signalling overhead. To address this issue, we propose a novel multi-timescale transmission protocol. In particular, the receive beamforming at the access point (AP) is pre-determined based on the static angle information and the IRS phase-shifts are optimized relying on the long-term statistical CSI. With the obtained AP receive beamforming and IRS phase-shifts, the effective low-dimensional I-CSI is exploited to determine devices' transmit power in each coherence block, thus substantially reducing the signalling overhead. Theoretical analysis unveils that the achievable MSE scales on the order of ${\cal O}\left( {K/\left( {{N^2}M} \right)} \right)$, where $M$, $N$, and $K$ are the number of AP antennas, IRS elements, and devices, respectively. We also prove that the channel-inversion power control is asymptotically optimal for large $N$, which reveals that the full power transmission policy is not needed for lowering the power consumption of energy-limited devices.
\end{abstract}

\begin{IEEEkeywords}
IRS, multi-timescale design, AirComp.
\end{IEEEkeywords}

\IEEEpeerreviewmaketitle

\vspace{-10pt}
\section{Introduction}
\vspace{-2pt}
Over the air computation (AirComp) is viewed as an innovative technology for fast wireless data aggregation over distributed Internet of things (IoT) devices \cite{zhu2021over}. The basic idea of AirComp is the exploitation of the waveform superposition property of multiple access channels, which enables an access point (AP) to directly receive a function of simultaneously transmitted data from massive devices. By integrating the computation and communication seamlessly, AirComp is particularly appealing for various  latency-critical applications which require data aggregation \cite{zhu2021over}, e.g., wireless control, distributed sensing, and wireless federated learning \cite{cao2024overview}. Different from the conventional rate-oriented communications, the mean-square error (MSE) is widely adopted as a performance metric to quantify the computation error of AirComp. To enable reliable AirComp, several works dedicated on the transceiver design (e.g., transmit power control, receive beamforming, and receive denoising factor design) to minimize the MSE induced by the wireless fading and receiver noise \cite{chen2023over,cao2020optimized,fu2021uav}.

Despite theoretical progress, only relying on the transceiver design may not guarantee the computation performance due to the random wireless fading. To address this issue, intelligent reflecting surface (IRS) has been envisioned as a promising technology to create favorable channel conditions via tuning the phase-shift of each reflecting element \cite{wu2019beamforming}. In addition to exploiting the IRS to enhance conventional wireless communication/sensing \cite{wu2019beamforming,chen2022active,meng2022intelligent,chen2022irs,10143420}, it is also highly appealing to leverage the high passive beamforming gain of IRSs to suppress the computation MSE of AirComp. To fully unleash the potential of IRSs for AirComp, the joint optimization of IRS phase-shifts, the power control, and the receive beamforming was widely investigated in existing works \cite{fang2021over,zhai2023simultaneously,10316588}. Although the designs in previous works can substantially reduce the MSE of AirComp, some fundamental issues still remain unaddressed in IRS aided AirComp. First, the joint optimization designs \cite{fang2021over,zhai2023simultaneously,10316588} mainly rely on the full instantaneous CSI (I-CSI), which incurs high channel estimation overhead especially for large-scale AirComp systems with massive devices and IRS elements. To cater for large-scale AirComp systems, it is crucial for developing a novel protocol operating without reliance on the full I-CSI. Second, only numerical algorithms were presented in previous works \cite{fang2021over,zhai2023simultaneously,10316588}, which may not provide concrete insights into the impact of the IRS on the optimal power control policy of AirComp. Under the given random channel, the initial work \cite{cao2020optimized} unveiled that the optimal power control of AirComp is a combination of the \emph{full power transmission} and the \emph{channel-inversion power control}. The favourable channel created by the IRS for Aircomp has yet to be fully exploited,\emph{ bringing the question of whether full power transmission is necessary by considering the effect of the IRS}. This is an essential consideration for reducing the power consumption in energy-limited IoT devices.
\begin{figure}[!t]
\centering
\includegraphics[width= 0.45\textwidth]{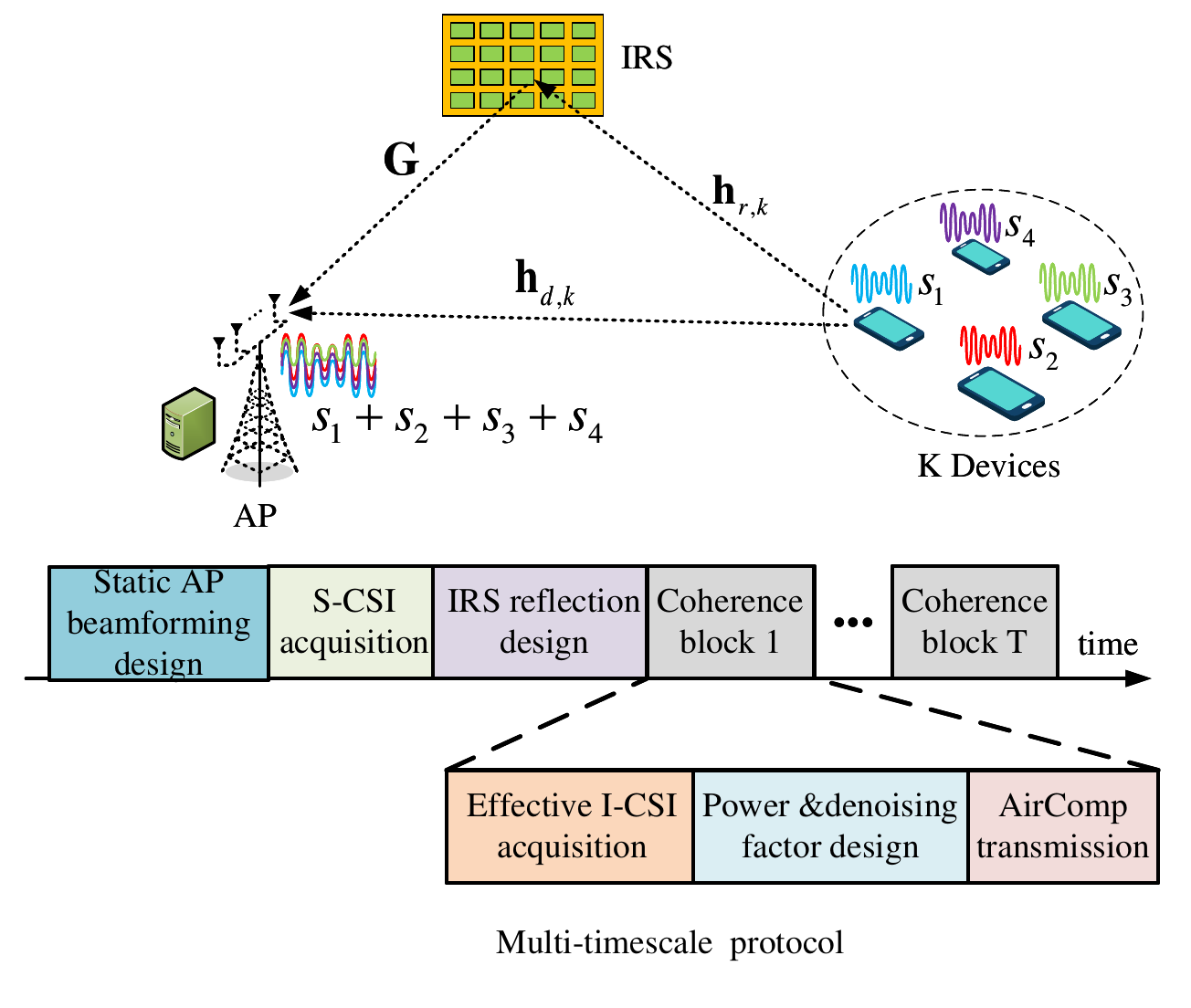}
\DeclareGraphicsExtensions.
\caption{Multi-timescale protocol for IRS aided AirComp.}
\label{model}
\vspace{-16pt}
\end{figure}

Motivated by the above issues, we consider an IRS aided AirComp, see Fig. 1, where an IRS is deployed near the AP to assist data aggregation from multiple devices. To lower the high channel estimation overhead, we develop a novel multi-timescale protocol adapting to large-scale IRS aided AirComp systems. To be specific, by exploiting the particular channel structure, the receive beamforming at the AP is pre-determined based on the static angle information and the IRS phase-shifts are optimized according to the long-term statistic CSI (S-CSI). With the obtained receive beamforming and IRS phase-shifts, the effective I-CSI of each device is reduced to a scalar, which greatly reduces the amount of CSI required to be obtained. The effective I-CSI is exploited in each channel coherence block to determine power control. We theoretically unveil that the MSE scales on the order of ${\cal O}\left( {K/\left( {{N^2}M} \right)} \right)$, which indicates that ultra-high reliable AirComp can be realized via the IRS. Then, we prove that only performing channel-inversion power control is asymptotically optimal for large $N$, which implies that the full power transmission is not needed, thereby rendering IRS aided AirComp a power-efficient architecture.

\vspace{-8pt}
\section{System model and Protocol Design}
\subsection{System Model}
As shown in Fig. \ref{model}, we investigate an IRS aided AirComp system, which comprises a multi-antenna AP, an IRS, and $K$ single-antenna IoT devices. The AP and IRS are equipped with $M$ antennas and $N$ reflecting elements, respectively. For convenience, we denote ${\cal K} \buildrel \Delta \over = \left\{ {1, \ldots K} \right\}$ and ${\cal N} \buildrel \Delta \over = \left\{ {1, \ldots N} \right\}$ as the sets of devices and IRS elements, respectively. Let ${s_k}$ denote the data generated at device $k$, $\forall k \in {\cal K}$. Without loss of generality, it is assumed that ${\mathop{\rm E}\nolimits} \left[ {{s_k}} \right] = 0$, ${\mathop{\rm E}\nolimits} \left[ {{{\left| {{s_k}} \right|}^2}} \right] = 1$, and ${\mathop{\rm E}\nolimits} \left[ {{s_k}s_j^*} \right] = 0$, $\forall k \ne j$. In an IRS aided AirComp system, the AP is interested in obtaining an aggregation of the data from all devices, rather than each device's individual data. We assume that the AP aims to estimate the summation of the data from all devices, i.e., $s = \sum\nolimits_{k = 1}^K {{s_k}}$,  which is a typical target-function of the data aggregation.

The baseband channels from device $k$ to the AP, from device $k$ to the IRS, and from the IRS to the AP are denoted by ${{\bf{h}}_{d,k}} \in {\mathbb{C}^{M \times 1}}$, ${{\bf{h}}_{r,k}} \in \mathbb{C}^{N \times 1}$, and ${\bf{G}} \in \mathbb{C}^{M\times N}$, respectively. We consider that the IRS is deployed in the vicinity of the AP and it is naturally assumed that the IRS-AP link is purely line-of-sight (LoS). For notational simplicity, we assume that a uniform linear array (ULA) is equipped at both the AP and the IRS\footnote{Note that the proposed design can be directly applicable to the uniform planar array case.}. Let
\begin{align}\label{array_response}
{{\bf{a}}_N}\left( x \right) = \left[ {1,{e^{i2\pi \frac{d}{\lambda }\sin x}}, \ldots ,{e^{i2\pi \frac{d}{\lambda }\left( {N - 1} \right)\sin x}}} \right]
\end{align}
denote the array response vector of an $N$-element ULA array, where $\lambda$ and $d$ are wavelength and element spacing.
By denoting ${\rho _1}$ as the large-scale path-loss of the IRS-AP link, we can now express ${\bf{G}} = \sqrt {{\rho _1}} {{\bf{a}}_M}\left( {{\phi _r}} \right){\bf{a}}_N^H\left( {{\varphi _t}} \right)$, where ${{\phi _r}}$ and ${{\varphi _t}}$ represent the angle of arrival (AoA) and angle of departure (AoD) from the IRS to the AP, respectively. Considering that the direct links from devices to the AP may be easily blocked, the Rayleigh fading model is used to characterize these direct links and thus we have ${{\bf{h}}_{d,k}} = \sqrt {{\rho _d}} {{{\bf{\tilde h}}}_{d,k}}$, where  ${\rho _d}$ denotes the large-scale path-loss and ${{{\bf{\tilde h}}}_{d,k}} \sim {\cal C}{\cal N}\left( {0,{{\bf{I}}_M}} \right)$. Since the IRS is considered to be placed near the AP, the distance between the IRS and devices could be a bit large. Even through the IRS has certain height, it may not guarantee that the devices-IRS links are purely LoS. Hence, the Rician fading is employed to characterize devices-IRS links as
\begin{align}\label{Rician_channel}
{{\bf{h}}_{r,k}} = \sqrt {\frac{{{\rho _{r,k}}\delta }}{{\delta  + 1}}} {{\bf{\bar h}}_{r,k}} + \sqrt {\frac{{{\rho _{r,k}}}}{{\delta  + 1}}} {{\bf{\tilde h}}_{r,k}},\forall k \in {\cal K},
\end{align}
where ${\delta} $ is the Rician factor and ${{\rho _{r,k}}}$ denotes the path-loss of the device $k$-IRS link. Note that ${{{\bf{\bar h}}}_{r,k}} = {{\bf{a}}_N}\left( {{\nu _k}} \right)$ and ${{{\bf{\tilde h}}}_{r,k}}$ represent the LoS and non-LoS (NLoS) channel components, respectively, where ${{\nu _k}}$ denotes the AoA of the device $k$-IRS link and ${{{\bf{\tilde h}}}_{r,k}} \sim {\cal C}{\cal N}\left( {0,{{\bf{I}}_N}} \right)$. Note that $\left\{ {{\bf{G}},{{{\bf{\bar h}}}_{r,k}}} \right\}$ is referred to S-CSI, which depends on the location information, thereby changing slowly especially in low-mobility scenarios.

Let ${\bf{\Theta }}{\rm{ = }}{\mathop{\rm diag}\nolimits} \left\{ {{e^{i{\theta _1}}}, \ldots ,{e^{i{\theta _N}}}} \right\}$ denote the reflection matrix of the IRS, where $ {\theta _n}$ denotes the phase-shift of the $n$-th IRS element. For reducing the practical hardware cost, the discrete phase-shift ${\theta _n} \in {\cal F} \buildrel \Delta \over = \left\{ {0,\Delta \theta , \ldots ,\left( {L - 1} \right)\Delta \theta } \right\}$ is considered, where $\Delta \theta  = 2\pi /L$ with $L$ representing the number of quantization levels. Under the given IRS reflection matrix, the received signal at the AP can be expressed as
\begin{align}\label{received_signal}
{\bf{y}} = \sum\nolimits_{k \in {\cal K}} {{{\bf{h}}_k}\left( {\bf{\Theta }} \right)} {b_k}{s_k} + {\bf{n}},
\end{align}
where ${{\bf{h}}_k}\left( {\bf{\Theta }} \right) = {{\bf{h}}_{d,k}} + {\bf{G\Theta }}{{\bf{h}}_{r,k}}$ denotes the equivalent channel vector from device $k$ to the AP, ${b_k} \in \mathbb{C}$ is the transmit scalar of device $k$, and ${\bf{n}} \sim {\cal C}{\cal N}\left( {0,{{\bf{I}}_M}} \right)$ represents the additive white Gaussian noise at the AP. Note that all devices have a maximum transmit power, denoted by ${P_{\max }}$, and thus we have ${\left| {{b_k}} \right|^2} \le {P_{\max }}$. Then, the estimated value of $\sum\nolimits_{k = 1}^K {{s_k}}$ at the AP is
\begin{align}\label{estimated_value}
\hat s = \frac{1}{{\sqrt \eta  }}{{\bf{v}}^H}{\bf{y}} = \frac{1}{{\sqrt \eta  }}{{\bf{v}}^H}\left( {\sum\nolimits_{k \in {\cal K}} {{{\bf{h}}_k}\left( {\bf{\Theta }} \right)} {b_k}{s_k} + {\bf{n}}} \right),
\end{align}
where $\eta$ and ${\bf{v}}$ satisfying $\left\| {\bf{v}} \right\| = 1$ are the denoising factor and the receive beamforming vector at the AP, respectively. Based on \eqref{estimated_value}, the MSE is employed to measure the distortion between ${\hat s}$ and $s$, which is given by
\begin{align}\label{MSE_defination}
{\rm{MSE}}\left( {{\bf{v}},{\bf{\Theta }},{b_k},\eta } \right) \!=\! \sum\limits_{k = 1}^K {{{\left| {\frac{{{{\bf{v}}^H}{{\bf{h}}_k}\left( {\bf{\Theta }} \right){b_k}}}{{\sqrt \eta  }} \!-\! 1} \right|}^2}} \! + \! \frac{{{\sigma ^2}{{\left\| {\bf{v}} \right\|}^2}}}{\eta }.
\end{align}

It is observed from \eqref{MSE_defination} that $\left\{ {{\bf{\Theta }},{\bf{v}},{b_k},\eta } \right\}$ should be carefully optimized to suppress the resulting MSE. Previous works \cite{fang2021over,zhai2023simultaneously,10316588} focusing on optimizing $\left\{ {{\bf{\Theta }},{\bf{v}},{b_k},\eta } \right\}$ mainly exploited the alternating based semi-definite program technique, which highly relies on the full CSI, i.e., $\left\{ {{{\bf{h}}_{d,k}},{{\bf{h}}_{r,k}},{\bf{G}}} \right\}$. Note that the dimension of the full CSI is proportional to $MNK$. Hence, both the channel estimation overhead and computational burden of the optimization algorithm are heavy in large-scale networks, where $K$ and $N$ are large sufficiently. To this end, a new scheme with light channel estimation overhead and low complexity is needed to adapt to large-scale IRS aided AirComp networks.
\subsection{Protocol Design}
In this subsection, we introduce the proposed multi-timescale protocol in the IRS aided AirComp system to reduce the channel estimation overhead incurred by massive
IRS elements and devices. It is obvious that the MSE in \eqref{MSE_defination} is a decreasing function with respect to ${\left| {{{\bf{v}}^H}{{\bf{h}}_k}\left( {\bf{\Theta }} \right)} \right|^2},{\forall k \in {\cal K}}$. By fully exploiting the channel structure, we first obtain the following proposition to shed light on the design of the receive beamforming vector.
\begin{pos}
When ${{\bf{h}}_{d,k}} = {\bf{0}}$, the optimal ${\bf{v}}$ to minimize the MSE is given by
\begin{align}\label{optimal_v}
{{\bf{v}}^*} = {{\bf{a}}_M}\left( {{\phi _r}} \right)/\sqrt M.
\end{align}
\end{pos}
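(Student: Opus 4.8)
The plan is to exploit the rank-one structure of the IRS-AP link to show that, once the direct links vanish, all $K$ effective channels point in a single common spatial direction, after which the receive beamformer is pinned down by a Cauchy-Schwarz argument. First I would set ${\bf{h}}_{d,k} = {\bf{0}}$ in the definition of ${\bf{h}}_k\left( {\bf{\Theta }} \right)$ given below \eqref{received_signal}, yielding ${\bf{h}}_k\left( {\bf{\Theta }} \right) = {\bf{G\Theta }}{{\bf{h}}_{r,k}}$. Substituting ${\bf{G}} = \sqrt {{\rho _1}} {{\bf{a}}_M}\left( {{\phi _r}} \right){\bf{a}}_N^H\left( {{\varphi _t}} \right)$ and observing that ${\bf{a}}_N^H\left( {{\varphi _t}} \right){\bf{\Theta }}{{\bf{h}}_{r,k}}$ is a scalar, I would write ${\bf{h}}_k\left( {\bf{\Theta }} \right) = {c_k}{{\bf{a}}_M}\left( {{\phi _r}} \right)$, where ${c_k} \buildrel \Delta \over = \sqrt {{\rho _1}} {\bf{a}}_N^H\left( {{\varphi _t}} \right){\bf{\Theta }}{{\bf{h}}_{r,k}} \in \mathbb{C}$ depends on ${\bf{\Theta }}$ and ${{\bf{h}}_{r,k}}$ but not on ${\bf{v}}$. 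The crucial point is that every effective channel is collinear with the single ${\bf{v}}$-independent vector ${{\bf{a}}_M}\left( {{\phi _r}} \right)$.

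Given this collinearity, each effective gain factors as ${\left| {{{\bf{v}}^H}{{\bf{h}}_k}\left( {\bf{\Theta }} \right)} \right|^2} = {\left| {{c_k}} \right|^2}{\left| {{{\bf{v}}^H}{{\bf{a}}_M}\left( {{\phi _r}} \right)} \right|^2}$, so the $K$ gains differ only through the constants ${\left| {{c_k}} \right|^2}$ and share the common ${\bf{v}}$-dependent factor ${\left| {{{\bf{v}}^H}{{\bf{a}}_M}\left( {{\phi _r}} \right)} \right|^2}$. Since the MSE in \eqref{MSE_defination} is decreasing in every ${\left| {{{\bf{v}}^H}{{\bf{h}}_k}\left( {\bf{\Theta }} \right)} \right|^2}$ as noted above the proposition, and since these gains are all maximized by the same ${\bf{v}}$, minimizing the MSE over ${\bf{v}}$ reduces to the single problem of maximizing ${\left| {{{\bf{v}}^H}{{\bf{a}}_M}\left( {{\phi _r}} \right)} \right|^2}$ subject to $\left\| {\bf{v}} \right\| = 1$. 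This is exactly the step that makes the direct-link-free case tractable: for nonzero ${{\bf{h}}_{d,k}}$ the channels would generally point in different directions and no single ${\bf{v}}$ could maximize all gains, whereas here the rank-one IRS-AP channel removes that conflict. I would also emphasize that the common direction ${{\bf{a}}_M}\left( {{\phi _r}} \right)$ is independent of ${\bf{\Theta }}$, $b_k$, and $\eta$, so the resulting ${\bf{v}}$ is jointly optimal and can be pre-determined from the static angle information, as the protocol requires.

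I would close the argument with Cauchy-Schwarz, ${\left| {{{\bf{v}}^H}{{\bf{a}}_M}\left( {{\phi _r}} \right)} \right|^2} \le {\left\| {\bf{v}} \right\|^2}{\left\| {{{\bf{a}}_M}\left( {{\phi _r}} \right)} \right\|^2} = M$, using $\left\| {\bf{v}} \right\| = 1$ and the fact that every entry of the array response \eqref{array_response} has unit modulus, so ${\left\| {{{\bf{a}}_M}\left( {{\phi _r}} \right)} \right\|^2} = M$. Equality holds if and only if ${\bf{v}}$ is aligned with ${{\bf{a}}_M}\left( {{\phi _r}} \right)$, and normalization gives ${{\bf{v}}^*} = {{\bf{a}}_M}\left( {{\phi _r}} \right)/\sqrt M$, which is \eqref{optimal_v}.

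The main obstacle is conceptual rather than computational: one must verify that the per-device scalars ${c_k}$ genuinely factor out, so that a single beamformer is simultaneously optimal for all devices, and that the remaining variables $\left\{ {{b_k},\eta } \right\}$ absorb the magnitudes $\left| {{c_k}} \right|$ without affecting the optimal direction of ${\bf{v}}$. Once the collinearity induced by the rank-one ${\bf{G}}$ is established, the Cauchy-Schwarz step is routine.
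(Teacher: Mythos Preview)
Your proof is correct and follows essentially the same route as the paper: exploit the rank-one structure of ${\bf G}$ to factor ${\left| {{{\bf{v}}^H}{{\bf{h}}_k}\left( {\bf{\Theta }} \right)} \right|^2}$ into a device-dependent scalar times the common term ${\left| {{{\bf{v}}^H}{{\bf{a}}_M}\left( {{\phi _r}} \right)} \right|^2}$, then observe that maximizing this single common factor simultaneously maximizes every gain. Your explicit Cauchy--Schwarz justification of the maximizer is slightly more detailed than the paper's, which simply asserts the optimal ${\bf v}$, but the argument is otherwise identical.
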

\begin{proof}
For arbitrarily given ${\bf{\Theta }}$ and ${{\bf{h}}_{d,k}} = {\bf{0}}$, the equivalent channel vector for each device is given by
\begin{align}\label{effective_channel2}
{{\bf{h}}_k}\left( {\bf{\Theta }} \right) = \sqrt {{\rho _1}{\rho _{r,k}}} {{\bf{a}}_M}\left( {{\phi _r}} \right){\bf{a}}_N^H\left( {{\varphi _t}} \right){\bf{\Theta }}{{\bf{h}}_{r,k}},\forall k.
\end{align}
Then, ${\left| {{{\bf{v}}^H}{{\bf{h}}_k}\left( {\bf{\Theta }} \right)} \right|^2}$ can be expressed as
\begin{align}\label{channel_power}
{\left| {{{\bf{v}}^H}{{\bf{h}}_k}\left( {\bf{\Theta }} \right)} \right|^2} \!\!=\!\! {\rho _1}{\rho _{r,k}}{\left| {{{\bf{v}}^H}{{\bf{a}}_M}\left( {{\phi _r}} \right)} \right|^2}{\left| {{\bf{a}}_N^H\left( {{\varphi _t}} \right){\bf{\Theta }}{{\bf{h}}_{r,k}}} \right|^2},\forall k.
\end{align}
It is observed from \eqref{channel_power} that maximizing ${\left| {{{\bf{v}}^H}{{\bf{h}}_k}\left( {\bf{\Theta }} \right)} \right|^2}$ is equivalent to maximizing ${\left| {{{\bf{v}}^H}{{\bf{a}}_M}\left( {{\phi _r}} \right)} \right|^2}$. By setting ${\bf{v}} = {{\bf{a}}_M}\left( {{\phi _r}} \right)/\sqrt M $, all elements in $\left\{ {{{\left| {{{\bf{v}}^H}{{\bf{h}}_k}\left( {\bf{\Theta }} \right)} \right|}^2},\forall k \in {\cal K}} \right\}$ is maximized simultaneously, which completes the proof.
\end{proof}

For the case of ${{\bf{h}}_{d,k}} = 0$, Proposition 1 indicates that the optimal receive beamforming vector ${{\bf{v}}^{\rm{*}}}$ corresponds to a simple MRC towards the IRS, regardless of the IRS reflection matrix ${\bf{\Theta }}$. Note that the condition ${{\bf{h}}_{d,k}} = 0$ is practically valid for the IRS with large $N$, which leads to the result that the strength of the direct device-AP channel is far smaller than that of the reflected channel. Due to the fixed positions of the BS and the IRS, the configuration of ${\bf{v}}$ is implemented offline without real-time CSI and remains static. By setting ${\bf{v}} = {{\bf{v}}^*}$, ${\bf{\Theta }}$ is determined by the S-CSI and $\left\{ {{b_k},\eta } \right\}$ is optimized relying on the effective I-CSI ${\left( {{{\bf{v}}^*}} \right)^H}{{\bf{h}}_k}\left( {\bf{\Theta }} \right)$. The corresponding optimization problem associated with the multi-timescale protocol can be written as
\begin{subequations}\label{C21}
\begin{align}
\label{C21-a}\mathop {\min }\limits_{\left\{ {{{\bf{\Theta }}}} \right\}}\;\;& {\mathop{\rm E}\nolimits} \left\{ {\mathop {\min }\limits_{\left\{ {{b_k}} \right\},\eta } {\rm{MSE}}\left( {{{\bf{v}}^*},{\bf{\Theta }},{b_k},\eta } \right)} \right\}\\
\label{C21-b}{\rm{s.t.}}\;\;&{\left| {{b_k}} \right|^2} \le {P_{\max }}, ~\forall k \in {\cal K},\\
\label{C21-c}&\eta  \ge 0,\\
\label{C21-d}&{\theta _n} \in {\cal F}. ~\forall n \in {\cal N}.
\end{align}
\end{subequations}
In problem \eqref{C21}, the inner MSE-minimization problem is over the short-term variables $\left\{ {{b_k},\eta } \right\}$ in each channel block for the given ${\bf{\Theta }}$, while the outer MSE-minimization problem is over the long term variable ${\bf{\Theta }}$. The expectation in \eqref{C21-a} is taken over all random realizations of the I-CSI.

Problem \eqref{C21} is challenging to be solved optimally since the closed-form expression of the ergodic MSE under the optimal $\left\{ {{b_k},\eta } \right\}$ is difficult to be derived in general. Nevertheless, we propose a two-step algorithm to obtain its high-quality solution with low-complexity.
\subsubsection{Optimization of ${\bf{\Theta }}$}
We first derive the expectation of the effective channel power gain as
\begin{align}\label{effective_channel_power_gain}
{\Gamma _k} &\buildrel \Delta \over = {\mathop{\rm E}\nolimits} \left\{ {{{\left| {{{\left( {{{\bf{v}}^*}} \right)}^H}{{\bf{h}}_k}\left( {\bf{\Theta }} \right)} \right|}^2}} \right\}\nonumber\\
&=\frac{1}{M}{\mathop{\rm E}\nolimits} \left\{ {{{\left| {{\bf{a}}_M^H\left( {{\phi _r}} \right)\left( {{{\bf{h}}_{d,k}} + {\bf{G\Theta }}{{\bf{h}}_{r,k}}} \right)} \right|}^2}} \right\}\nonumber\\
&= {\rho _d} + \frac{{M{\rho _1}{\rho _2}}}{{\delta  + 1}}\left( {\delta {{\left| {{\bf{a}}_N^H\left( {{\varphi _t}} \right){\bf{\Theta }}{{\bf{a}}_N}\left( {{\nu _k}} \right)} \right|}^2} + N} \right).
\end{align}
Motivated by the fact that the objective value of problem \eqref{C21} decreases with respect to ${\left| {{{\left( {{{\bf{v}}^*}} \right)}^H}{{\bf{h}}_k}\left( {\bf{\Theta }} \right)} \right|^2}$, we aim to optimize ${\bf{\Theta }}$ for balancing all elements in $\left\{ {{\Gamma _k},\forall k \in {\cal K}} \right\}$. It is observed from \eqref{effective_channel_power_gain} that maximizing ${\Gamma _k}$ is equivalent to maximizing ${\left| {{\bf{a}}_N^H\left( {{\varphi _t}} \right){\bf{\Theta }}{{\bf{a}}_N}\left( {{\nu _k}} \right)} \right|^2}$. To this end, we introduce a set of auxiliary variables $\left\{ {{{\bf{\Theta }}_k},\forall k \in {\cal K}} \right\}$ with ${{\bf{\Theta }}_k}{\rm{ = diag}}\left\{ {{e^{i{\theta _{k,1}}}}, \ldots ,{e^{i{\theta _{k,N}}}}} \right\}$, where ${\theta _{k,n}} \in {\cal F}$. To maximize ${\left| {{\bf{a}}_N^H\left( {{\varphi _t}} \right){{\bf{\Theta }}_k}{{\bf{a}}_N}\left( {{\nu _k}} \right)} \right|^2}$, the solution of ${{{\bf{\Theta }}_k}}$ is obtained by projecting the optimized continues phase-shifts to their nearest discrete values in ${\cal F}$, which is given by
\begin{align}\label{theta_k}
\theta _{k,n}^* = \arg {\min _{{\theta _{k,n}} \in {\cal F}}}\left| {{\theta _{k,n}} - \frac{{2\pi dn}}{\lambda }\left( {{\varphi _t} - {\nu _k}} \right)} \right|,\forall k,n.
\end{align}
Based on the obtained $\theta _{k,n}^*$ in \eqref{theta_k}, each phase-shift in the IRS reflection matrix ${{\bf{\Theta }}^{\rm{*}}}$ is constructed by using the majority voting technique as
\begin{align}\label{majority_voting}
\theta _n^* = \arg {\max _{{\theta _n} \in {\cal F}}}\sum\nolimits_{k = 1}^K {1\left\{ {{\theta _n} = \theta _{k,n}^*} \right\}} ,\forall n.
\end{align}
\subsubsection{Optimization of $\left\{ {{b_k},\eta } \right\}$}
With the obtained ${{\bf{\Theta }}^{\rm{*}}}$ and ${{\bf{v}}^*}$, the effective channel from device $k$ to the AP reduces to a scalar, which is denoted by ${\gamma _k} = {\left( {{{\bf{v}}^*}} \right)^H}{{\bf{h}}_k}\left( {{{\bf{\Theta }}^{\rm{*}}}} \right)$. By letting ${b_k} = \sqrt {{p_k}} \gamma _k^H/\left| {{\gamma _k}} \right|$, the MSE in \eqref{MSE_defination} is rewritten as
\begin{align}\label{MSE1}
{\rm{MSE}}\left( {{p_k},\eta } \right) = \sum\nolimits_{k = 1}^K {{{\left| {\frac{{\sqrt {{p_k}} \left| {{\gamma _k}} \right|}}{{\sqrt \eta  }} - 1} \right|}^2}}  + \frac{{{\sigma ^2}}}{\eta }.
\end{align}
Hence, the optimization of $\left\{ {{b_k},\eta } \right\}$ is reduced to the problem by optimizing $\left\{ {{p_k},\eta } \right\}$, which is written as
\begin{align}\label{MSE_minimize}
\mathop {\min }\limits_{\left\{ {{p_k}} \right\},\eta } {\rm{ MSE}}\left( {{p_k},\eta } \right)~~{\rm{s}}{\rm{.t.}} ~ 0 \le {p_k} \le {P_{\max }}, ~\eqref{C21-c}.
\end{align}
Without loss of generality, it is assumed that ${\left| {{\gamma _1}} \right|^2} \le  \ldots  \le {\left| {{\gamma _K}} \right|^2}$. Then, the optimal solution of problem \eqref{MSE_minimize} can be derived similarly in \cite{cao2020optimized}, which is presented in the following proposition.
\begin{pos}
The optimal denoising factor of problem \eqref{MSE_minimize} is given by
\begin{align}\label{optimal_eta}
{\eta ^*} = \mathop {\min }\limits_k {{\tilde \eta }_k}
\end{align}
with
\begin{align}\label{eta_k}
{\tilde \eta _k} = {\left( {\frac{{{\sigma ^2} + \sum\nolimits_{j = 1}^k {{P_{\max }}{{\left| {{\gamma _j}} \right|}^2}} }}{{\sum\nolimits_{j = 1}^k {\sqrt {{P_{\max }}} \left| {{\gamma _j}} \right|} }}} \right)^2},\forall k \in {\cal K}.
\end{align}
With ${\eta ^*}$, the optimal power control is
\begin{align}\label{optimal_power}
p_k^* = \begin{cases}{{P_{\max }}}, &{\forall k \in \left\{ {1, \ldots ,\tilde k} \right\}}, \cr {{\eta ^*}/{{\left| {{\gamma _k}} \right|}^2}},
&{\forall k \in \left\{ {\tilde k + 1, \ldots ,K} \right\}}, \end{cases}
\end{align}
where $\tilde k = \mathop {\arg \min }\nolimits_{k \in {\cal K}} {{\tilde \eta }_k}$.
\end{pos}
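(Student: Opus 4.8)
The plan is to solve the inner problem in two stages: first minimize over the powers $\{p_k\}$ for a fixed denoising factor $\eta$, then reduce to a scalar problem in $\eta$. Since each $|\gamma_k|$ is real and nonnegative, every summand in \eqref{MSE1} equals $\left(\sqrt{p_k}|\gamma_k|/\sqrt{\eta}-1\right)^2$, which is minimized over $p_k$ independently of the other devices. The unconstrained minimizer is the channel-inversion value $p_k=\eta/|\gamma_k|^2$, and this is feasible precisely when $\eta\le P_{\max}|\gamma_k|^2$. Using the ordering $|\gamma_1|^2\le\cdots\le|\gamma_K|^2$, this produces a threshold rule: for a given $\eta$, the weak-channel devices with $P_{\max}|\gamma_k|^2<\eta$ saturate at $p_k=P_{\max}$, while the remaining strong-channel devices use channel inversion and contribute zero. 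Substituting back yields the reduced objective $g(\eta)=\sum_{k:\,P_{\max}|\gamma_k|^2<\eta}\left(\sqrt{P_{\max}}|\gamma_k|/\sqrt{\eta}-1\right)^2+\sigma^2/\eta$, which already displays the two-regime structure of \eqref{optimal_power}.

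The second stage minimizes $g$ over $\eta>0$. The key trick is the substitution $t=1/\sqrt{\eta}$, under which device $k$'s contribution becomes $\bigl((1-\sqrt{P_{\max}}|\gamma_k|\,t)_+\bigr)^2$ and the noise term becomes $\sigma^2 t^2$; each is convex in $t$, so $g$ is convex in $t$ and has a unique minimizer. Within the regime where exactly the first $k$ devices transmit at full power, $g$ is a plain quadratic in $t$, and setting its derivative to zero gives the stationary point $t=\sum_{j=1}^k\sqrt{P_{\max}}|\gamma_j|/\bigl(\sigma^2+\sum_{j=1}^k P_{\max}|\gamma_j|^2\bigr)$, i.e.\ $\eta=\tilde\eta_k$ exactly as in \eqref{eta_k}. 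It then remains only to identify the correct regime.

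The \emph{main obstacle} is establishing self-consistency: that the index $\tilde k=\arg\min_k\tilde\eta_k$ yields a denoising factor lying in its own regime, $P_{\max}|\gamma_{\tilde k}|^2<\tilde\eta_{\tilde k}\le P_{\max}|\gamma_{\tilde k+1}|^2$. I would prove this with a mediant inequality: writing $\tilde\eta_k=(A_k/B_k)^2$ with $A_k=\sigma^2+\sum_{j\le k}P_{\max}|\gamma_j|^2$ and $B_k=\sum_{j\le k}\sqrt{P_{\max}}|\gamma_j|$, the ratio $A_{k+1}/B_{k+1}$ is a weighted mediant of $A_k/B_k$ and $\sqrt{P_{\max}}|\gamma_{k+1}|$ and hence lies strictly between them. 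This gives the equivalence $\tilde\eta_{k+1}\gtrless\tilde\eta_k\iff P_{\max}|\gamma_{k+1}|^2\gtrless\tilde\eta_k$, which forces the sequence $\tilde\eta_k$ to decrease up to $\tilde k$ and increase afterward. The minimality $\tilde\eta_{\tilde k}\le\tilde\eta_{\tilde k+1}$ then delivers the upper bound $\tilde\eta_{\tilde k}\le P_{\max}|\gamma_{\tilde k+1}|^2$, while the strict decrease at the preceding step (together with $\sigma^2>0$ to handle the boundary case $\tilde k=1$) delivers the lower bound $P_{\max}|\gamma_{\tilde k}|^2<\tilde\eta_{\tilde k}$. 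Since $\eta^*=\tilde\eta_{\tilde k}$ lies in the correct interval, exactly devices $1,\dots,\tilde k$ operate at full power, and the convexity of $g$ in $t$ certifies global optimality; substituting $\eta^*$ into the per-device rule recovers \eqref{optimal_power}, completing the argument.
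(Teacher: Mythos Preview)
Your argument is correct. The paper itself does not prove Proposition~2 at all; it simply states that the optimal solution ``can be derived similarly in \cite{cao2020optimized}'' and records the result, so there is no in-paper approach to compare against. Your two-stage derivation (per-device minimization yielding the threshold structure, then the convex reparametrization $t=1/\sqrt{\eta}$ with the mediant argument for self-consistency of $\tilde k$) is precisely the content of that reference and supplies the details the paper omits. One small remark: in your self-consistency step, the lower bound $P_{\max}|\gamma_{\tilde k}|^2<\tilde\eta_{\tilde k}$ actually follows directly from the mediant in the decreasing branch (it gives $c_{\tilde k}<r_{\tilde k}<r_{\tilde k-1}$, not merely $c_{\tilde k}<r_{\tilde k-1}$), so the chain closes cleanly without any extra work beyond the $\sigma^2>0$ observation for the boundary case $\tilde k=1$.
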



\begin{rem}
The implementation of the multi-timescale protocol for IRS aided AirComp is illustrated in Fig. \ref{model}. In particular, ${{{\bf{v}}^*}}$ is pre-determined based on \eqref{optimal_v}, where the static angle information between the BS and the IRS is used. Then, the long-term S-CSI $\left\{ {{{{\bf{\bar h}}}_{r,k}}} \right\}$ is acquired for determining  ${{\bf{\Theta }}^*}$ according to \eqref{majority_voting}. With the configured ${{\bf{v}}^*}$ and ${{\bf{\Theta }}^*}$, the effective channel ${\gamma _k} = {\left( {{{\bf{v}}^*}} \right)^H}{{\bf{h}}_k}\left( {{{\bf{\Theta }}^*}} \right)$ is estimated in each coherence block, based on which $\left\{ {{b_k},\eta } \right\}$ is obtained according to Proposition 2. The associated channel estimation overhead is ${\cal O}\left( K \right)$, which is significantly lower than ${\cal O}\left( {MNK} \right)$ in existing works \cite{fang2021over,zhai2023simultaneously,10316588}. Moreover, the proposed solutions are derived in closed-form expressions without iterative algorithm and thereby the computational complexity is low, which makes it more appealing for large-scale systems with massive $N$ and $K$.
\end{rem}

\section{Theoretical Analysis}
In this section, we provide the theoretical analysis to characterize the performance of the proposed multi-timescale protocol in the asymptotic region. To shed light on the ability of using the IRS to suppress the computation MSE, we focus on the case of ${{\bf{h}}_{d,k}} = {\bf{0}}$ and the purely LoS links, i.e., ${{\bf{h}}_{r,k}} = \sqrt {{\rho _{r,k}}} {{{\bf{\bar h}}}_{r,k}}$.
\subsection{Scaling Law Analysis}
The scaling law of the achieved MSE with respect to $M$, $N$, and $K$ is theoretically derived in the following theorem.
\begin{thm}
For the case of $L=2$, the MSE is able to scale on the order of ${\cal O}\left( {K/\left( {{N^2}M} \right)} \right)$ as $N \to \infty $ and $K \to \infty $.
\begin{proof}
First, we divide all the IRS elements into the two groups for each device as
\begin{align}\label{group}
{\cal N}_1^k = \left\{ {n \in {\cal N},\theta _n^*\! =\! \theta _{k,n}^*} \right\},{\cal N}_2^k \!=\! \left\{ {n \in {\cal N},\theta _n^* \ne \theta _{k,n}^*} \right\}.
\end{align}
where ${\theta _{k,n}^*}$ and ${\theta _n^*}$ are defined in \eqref{theta_k} and \eqref{majority_voting}, respectively. Under the given set ${\cal N}_1^k$ and ${\cal N}_2^k$, we focus on characterizing the array gain ${\left| {{\bf{a}}_N^H\left( {{\varphi _t}} \right){\bf{\Theta }}{{\bf{a}}_N}\left( {{\nu _k}} \right)} \right|^2}$, $\forall k$, as follows
\begin{align}\label{array_gain}
&{\left| {{\bf{a}}_N^H\left( {{\varphi _t}} \right){\bf{\Theta }}{{\bf{a}}_N}\left( {{\nu _k}} \right)} \right|^2}\nonumber\\
& = {\left| {\sum\limits_{n \in {\cal N}_1^k} {{e^{j\left( {\theta _n^* - {\alpha _n} + {\beta _{n,k}}} \right)}}}  + \sum\limits_{n \in {\cal N}_2^k} {{e^{j\left( {\theta _n^* - {\alpha _n} + {\beta _{n,k}}} \right)}}} } \right|^2},
\end{align}
where ${\alpha _n} = \arg \left( {{{\left[ {{{\bf{a}}_N}\left( {{\varphi _t}} \right)} \right]}_n}} \right)$ and ${\beta _{n,k}} = \arg \left( {{{\left[ {{{\bf{a}}_N}\left( {{\nu _k}} \right)} \right]}_n}} \right)$. It naturally holds that $\left( {\theta _n^* - {\alpha _n} + {\beta _{n,k}}} \right) \sim {\cal U}\left( { - \pi /2,\pi /2} \right)$. For the case of ${n \in {\cal N}_2^k}$, it follows that $\left( {\theta _n^* - {\alpha _n} + {\beta _{n,k}}} \right) \sim {\cal U}\left( {\pi /2,3\pi /2} \right)$. As $\left| {{\cal N}_1^k} \right| \to \infty $, by employing the strong law of large numbers, we have
\begin{align}\label{first_term}
&\sum\limits_{n \in {\cal N}_1^k} {{e^{j\left( {\theta _n^* - {\alpha _n} + {\beta _{n,k}}} \right)}}}  = \left| {{\cal N}_1^k} \right|\left( {\frac{1}{{\left| {{\cal N}_1^k} \right|}}\sum\limits_{n \in {\cal N}_1^k} {{e^{j\left( {\theta _n^* - {\alpha _n} + {\beta _{n,k}}} \right)}}} } \right) \nonumber\\
&\mathop  \to \limits^{a.s.} \left| {{\cal N}_1^k} \right|{\rm{E}}\left[ {{e^{j\left( {\theta _n^* - {\alpha _n} + {\beta _{n,k}}} \right)}}} \right] = \left| {{\cal N}_1^k} \right|{\rm{sinc}}\left( {1/2} \right).
\end{align}
Similarly, it is obtained that
\begin{align}\label{second_term}
\sum\nolimits_{n \in N_2^k} {{e^{j\left( {\theta _n^* - {\alpha _n} + {\beta _{n,k}}} \right)}}} \mathop  \to \limits^{a.s.}  - \left| {{\cal N}_2^k} \right|{\rm{sinc}}\left( {1/2} \right),
\end{align}
as $\left| {{\cal N}_2^k} \right| \to \infty $. Hence, we have
\begin{align}\label{array_gain_condition_limit}
{\left| {{\bf{a}}_N^H\left( {{\varphi _t}} \right){\bf{\Theta }}{{\bf{a}}_N}\left( {{\nu _k}} \right)} \right|^2}\mathop  \to \limits^{a.s.} {\left( {\left| {{\cal N}_1^k} \right| - \left| {{\cal N}_2^k} \right|} \right)^2}{\rm{sin}}{{\rm{c}}^2}\left( {1/2} \right)
\end{align}
as $N \to \infty $. Next, we aim to provide the approximation for ${\left| {{\cal N}_1^k} \right| - \left| {{\cal N}_2^k} \right|}$. We focus on the case that $K$ is an odd number and its opposite case can be analyzed similarly. For an arbitrary ${n \in {\cal N}}$, the probability of the event that ${n \in {\cal N}_1^k}$ can be derived as
\begin{align}\label{probality1}
\Pr \left( {n \in {\cal N}_1^k} \right) &= \sum\nolimits_{k = \left( {K - 1} \right)/2}^{K - 1}\binom{{K - 1}}{k} {{{\left( {\frac{1}{2}} \right)}^{K - 1}}}\nonumber\\
&\buildrel \Delta \over = {\lambda _1}.
\end{align}
Hence, $\left| {{\cal N}_1^k} \right|$ follows  a binomial distribution ${\cal B}\left( {N,{\lambda _1}} \right)$ and ${\left| {{\cal N}_2^k} \right|}$ follows ${\cal B}\left( {N,1 - {\lambda _1}} \right)$. By using the law of large numbers, we approximate ${\left| {{\cal N}_1^k} \right| - \left| {{\cal N}_2^k} \right|}$ as
\begin{align}\label{approximatation1}
\left| {{\cal N}_1^k} \right| - \left| {{\cal N}_2^k} \right| \approx {\mathop{\rm E}\nolimits} \left[ {\left| {{\cal N}_1^k} \right| - \left| {{\cal N}_2^k} \right|} \right] = N\left( {2{\lambda _1} - 1} \right).
\end{align}
As $K \to \infty$, we further have $N\left( {2{\lambda _1} - 1} \right) \sim 2/\sqrt {2\pi K}$ based on the Stirling's formula. Then, ${\left| {{\bf{a}}_N^H\left( {{\varphi _t}} \right){\bf{\Theta }}{{\bf{a}}_N}\left( {{\nu _k}} \right)} \right|^2}$ can be approximated as
\begin{align}\label{approximate_array_gain}
{\left| {{\bf{a}}_N^H\left( {{\varphi _t}} \right){\bf{\Theta }}{{\bf{a}}_N}\left( {{\nu _k}} \right)} \right|^2} \cong \frac{{2{N^2}}}{{\pi K}}{\rm{sin}}{{\rm{c}}^2}\left( {1/2} \right)
\end{align}
in the asymptotic region. Then, it can be obtained that
\begin{align}\label{upper_bound_MSE}
{\rm{MSE}}\mathop  \le  \limits^{\left( a \right)} \frac{{\sigma ^2}}{{{P_{\max }}{{\min }_k}{{\left| {{\gamma _k}} \right|}^2}}} \cong \frac{{\pi K{\sigma ^2}}}{{2{P_{\max }}{\rho _{\min }}{\rm{sin}}{{\rm{c}}^2}\left( {\frac{1}{2}} \right)M{N^2}}},
\end{align}
where ${\rho _{\min }} = {\rho _1}\left( {{{\min }_k}{\rho _{r,k}}} \right)$ and inequality ($a$) holds due to the suboptimal power control, i.e., ${p_k} = {P_{\max }}{\min _j}{\left| {{\gamma _j}} \right|^2}/{\left| {{\gamma _k}} \right|^2}$ is adopted. Thus, the proof is completed.
\end{proof}
\end{thm}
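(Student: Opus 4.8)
The plan is to bound the optimal MSE from above by the value attained under a convenient suboptimal power control and then to show that this bound decays like ${\cal O}(K/(N^2M))$. Concretely, I would adopt the ``channel inversion at the weakest device'' policy $p_k = P_{\max}\min_j|\gamma_j|^2/|\gamma_k|^2$ together with $\eta = P_{\max}\min_j|\gamma_j|^2$. Since $\min_j|\gamma_j|^2\le|\gamma_k|^2$, this choice is feasible, and it forces $\sqrt{p_k}|\gamma_k|/\sqrt{\eta}=1$ so that every signal-distortion term in \eqref{MSE1} vanishes, leaving ${\rm{MSE}}\le\sigma^2/(P_{\max}\min_k|\gamma_k|^2)$. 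The entire problem therefore reduces to a sharp lower bound on the weakest effective gain $\min_k|\gamma_k|^2$. By Proposition 1 and the purely-LoS assumption, $|\gamma_k|^2 = \rho_1\rho_{r,k}M\,|{\bf{a}}_N^H(\varphi_t){\bf{\Theta}}^*{\bf{a}}_N(\nu_k)|^2$, so it suffices to characterize the array gain $|{\bf{a}}_N^H(\varphi_t){\bf{\Theta}}^*{\bf{a}}_N(\nu_k)|^2$ under the majority-voted phase configuration.

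To analyze the array gain, I would partition, for each device $k$, the $N$ elements into the matched set ${\cal N}_1^k$ and the mismatched set ${\cal N}_2^k$ according to whether the majority vote \eqref{majority_voting} agrees with device $k$'s own rounded phase \eqref{theta_k}. For $L=2$ the per-element residual phase is the binary quantization error, which I model as uniform on $(-\pi/2,\pi/2)$ for $n\in{\cal N}_1^k$ and as its $\pi$-shift, uniform on $(\pi/2,3\pi/2)$, for $n\in{\cal N}_2^k$. Since the residuals across elements are independent, letting $N\to\infty$ (so that $|{\cal N}_1^k|$ and $|{\cal N}_2^k|$ both diverge) and invoking the strong law of large numbers replaces each inner sum by its mean, namely $|{\cal N}_1^k|\,{\rm{sinc}}(1/2)$ and $-|{\cal N}_2^k|\,{\rm{sinc}}(1/2)$; the array gain thus concentrates on $(|{\cal N}_1^k|-|{\cal N}_2^k|)^2{\rm{sinc}}^2(1/2)$.

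The decisive quantity is then the signed margin $|{\cal N}_1^k|-|{\cal N}_2^k|$, and this is where the dependence on $K$ enters. I would argue that, at a fixed element, the binary votes of the $K$ devices behave like independent fair coins, so device $k$ belongs to the majority with probability $\lambda_1=\Pr({\cal B}(K-1,1/2)\ge(K-1)/2)$; consequently $|{\cal N}_1^k|\sim{\cal B}(N,\lambda_1)$ and, by the law of large numbers in $N$, $|{\cal N}_1^k|-|{\cal N}_2^k|\approx N(2\lambda_1-1)$. The final ingredient is the asymptotics of $2\lambda_1-1$, which collapses to the central binomial probability $\binom{K-1}{(K-1)/2}2^{-(K-1)}$; Stirling's formula gives $2\lambda_1-1\sim\sqrt{2/(\pi K)}$, whence $(|{\cal N}_1^k|-|{\cal N}_2^k|)^2\sim 2N^2/(\pi K)$. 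Substituting back yields $\min_k|\gamma_k|^2\sim\rho_{\min}M\,(2N^2/(\pi K)){\rm{sinc}}^2(1/2)$ with $\rho_{\min}=\rho_1\min_k\rho_{r,k}$, and hence ${\rm{MSE}}={\cal O}(K/(N^2M))$.

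The step I expect to be the main obstacle is this margin estimate, because it must reconcile two competing effects across two coupled limits: a single-device coherent array gain would grow like $N^2$, but sharing one phase pattern among $K$ devices erodes the majority margin on the $\sqrt{K}$ scale, and it is exactly this tension that produces the $N^2/K$ law. I would need to take the limits in the right order---first $N\to\infty$ to concentrate the array gain for each fixed device and each realization of the partition, then $K\to\infty$ to evaluate the margin---and to grant the equidistribution that makes the quantization residuals uniform. A minor additional point is that the minimum over $k$ must not worsen the scaling; this is safe under the purely-LoS model because the array gain concentrates on the same deterministic value for every device, so $\min_k|\gamma_k|^2$ differs from a typical $|\gamma_k|^2$ only through the path loss $\rho_{\min}$.
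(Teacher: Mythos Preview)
Your proposal is correct and follows essentially the same approach as the paper: the suboptimal channel-inversion bound ${\rm MSE}\le\sigma^2/(P_{\max}\min_k|\gamma_k|^2)$, the matched/mismatched partition $({\cal N}_1^k,{\cal N}_2^k)$, the uniform residual-phase modeling over $(-\pi/2,\pi/2)$ and its $\pi$-shift, the SLLN concentration to $(|{\cal N}_1^k|-|{\cal N}_2^k|)^2\,{\rm sinc}^2(1/2)$, the binomial margin $N(2\lambda_1-1)$, and Stirling for $2\lambda_1-1\sim\sqrt{2/(\pi K)}$ are exactly the paper's steps. Your extra remarks on the order of limits and on why $\min_k$ does not spoil the scaling are a welcome clarification but do not depart from the paper's line of argument.
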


Theorem 1 unveils a promising capability of using the IRS to reduce the computation MSE. It is well-known that the power scaling law of ${\cal O}\left( {{N^2}} \right)$ is achieved in an IRS aided single-user communication system \cite{wu2019beamforming}. Here, we prove that the MSE in an IRS aided AirComp is able to decay with the order of $1/{N^2}$, thereby rendering ultra-reliable AirComp via increasing $N$.

\begin{figure*}[htbp]
	\begin{minipage}[t]{0.33\linewidth}
		\centering
		\includegraphics[width=5.8cm]{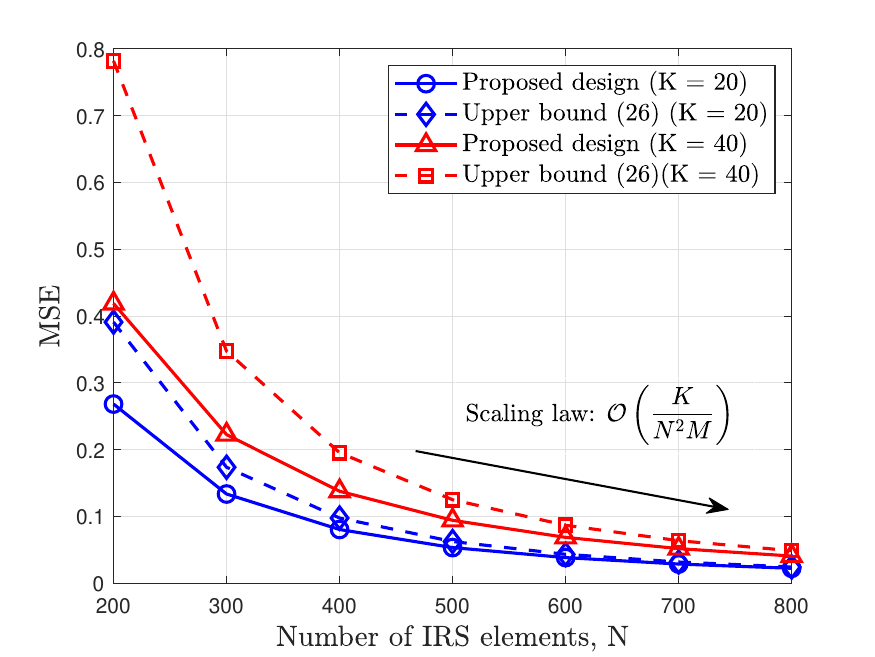}
		\vspace{-3mm}
		\caption{Evaluation of MSE scaling law.}
		\label{figure2}
	\end{minipage}%
	\begin{minipage}[t]{0.33\linewidth}
		\centering
		\includegraphics[width=5.8cm]{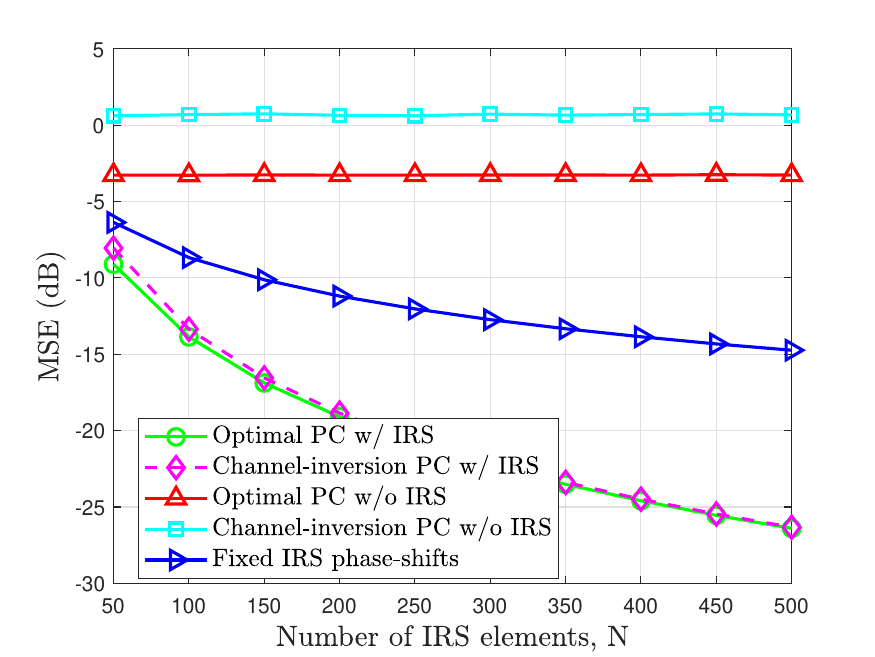}
		\vspace{-3mm}
		\caption{MSE versus $N$ with $K = 20$.}
		\label{figure3}
	\end{minipage}\hspace{1.5mm}
	\begin{minipage}[t]{0.33\linewidth}
		\centering
		\includegraphics[width=5.8cm]{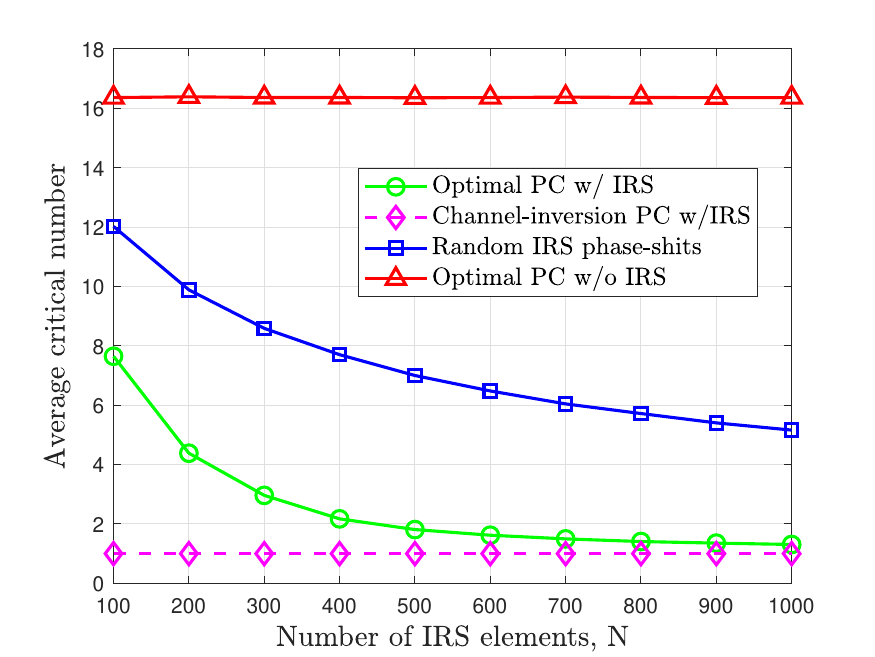}
		\vspace{-3mm}
		\caption{Critical number versus $N$ with $K = 20$.}
		\label{figure4}
	\end{minipage}
\vspace{-4mm}
\end{figure*}
\subsection{Is Full Power Transmission Needed?}
As studied in \cite{cao2020optimized}, \eqref{optimal_power} in Proposition 2 indicates that the optimal power control policy for AirComp is a combination of the \emph{full power transmission} and the \emph{channel-inversion power control} under the randomly given channel setup. Define ${\rm{MS}}{{\rm{E}}_o}$ and ${\rm{MS}}{{\rm{E}}_c}$ are the resulting MSE under the optimal power control and the \emph{channel-inversion power control}, respectively. By using the \emph{channel-inversion power control}, the transmit power is given by ${p_k} = {\eta _c}/{\left| {{\gamma _k}} \right|^2},\forall k$ with ${\eta _c} = {P_{\max }}{\left| {{\gamma _1}} \right|^2}$. Then, ${\rm{MS}}{{\rm{E}}_o}$ and ${\rm{MS}}{{\rm{E}}_c}$ are given by
\begin{align}\label{MSE_o}
{\rm{MS}}{{\rm{E}}_o} = \sum\limits_{k = 1}^{\tilde k} {{{\left( {\frac{{\sqrt {{P_{\max }}} \left| {{\gamma _k}} \right|}}{{\sqrt {{\eta ^*}} }} - 1} \right)}^2} + \frac{{{\sigma ^2}}}{{{\eta ^*}}}},~{\rm{MS}}{{\rm{E}}_c} = \frac{{{\sigma ^2}}}{{{\eta _c}}}.
\end{align}
To illustrate the impact of the IRS on the power control policy of AirComp, we aim to quantify the performance gap between the \emph{channel-inversion power control} and the optimal power control in an IRS aided AirComp system, which is provided in the following theorem.
\begin{thm}
For the case that large $N$ and $L = 2$, ${{\mathop{\rm MSE}\nolimits} _o}/{\rm{MS}}{{\rm{E}}_c} \ge \varepsilon$ holds provided
\begin{align}\label{required_N}
N \ge \sqrt {\frac{{\pi K{\varepsilon ^{1/2}}{\sigma ^2}}}{{2{\rho _1}{\rho _{r,1}}M{P_{\max }}\left( {1 - {\varepsilon ^{1/2}}} \right){\rm{sin}}{{\rm{c}}^2}\left( {1/2} \right)}}},
\end{align}
where $0 < \varepsilon  < 1$ is a given target value. Moreover, we have
\begin{align}\label{limit_value}
\mathop {\lim }\limits_{N \to \infty } \left( {{\rm{MS}}{{\rm{E}}_o}/{\rm{MS}}{{\rm{E}}_c}} \right) = 1.
\end{align}

\begin{proof}
First, a lower bound of ${{{{\mathop{\rm MSE}\nolimits} }_o}}$ can be derived as
\begin{align}\label{MSE_lowerbound}
{\rm{MS}}{{\rm{E}}_o} \mathop  \ge \limits^{\left( a \right)} \frac{{{\sigma ^2}}}{{{{\tilde \eta }_1}}}=\frac{{{P_{\max }}{\sigma ^2}{{\left| {{\gamma _1}} \right|}^2}}}{{{{\left( {{\sigma ^2} + {P_{\max }}{{\left| {{\gamma _1}} \right|}^2}} \right)}^2}}} \buildrel \Delta \over = {\rm{MSE}}_o^{{\rm{lb}}},
\end{align}
where ($a$) holds due to ${\eta ^*} = {\min _k}{{\tilde \eta }_k}$ and ${{\tilde \eta }_k}$ is given in \eqref{eta_k}. Then, we have
\begin{align}\label{MSE_lowerbound_ratio}
\frac{{{\rm{MS}}{{\rm{E}}_o}}}{{{\rm{MS}}{{\rm{E}}_c}}} \ge \frac{{{\rm{MSE}}_o^{{\rm{lb}}}}}{{{\rm{MS}}{{\rm{E}}_c}}} = {\left( {\frac{{{P_{\max }}{{\left| {{\gamma _1}} \right|}^2}}}{{{\sigma ^2} + {P_{\max }}{{\left| {{\gamma _1}} \right|}^2}}}} \right)^2}.
\end{align}
As indicated in the proof of Theorem 1, we obtain
\begin{align}\label{minimum_channel_gain}
{\left| {{\gamma _1}} \right|^2} \cong \frac{{2{\rho _1}{\rho _{r,1}}{\rm{sin}}{{\rm{c}}^2}\left( {1 /2} \right)}}{{\pi K}}M{N^2}.
\end{align}
By substituting \eqref{minimum_channel_gain} in \eqref{MSE_lowerbound_ratio} and solving the inequality ${\rm{MSE}}_o^{{\rm{lb}}}/{\rm{MS}}{{\rm{E}}_c} \ge \varepsilon$, \eqref{required_N} serves as a sufficient condition for ${{\mathop{\rm MSE}\nolimits} _o}/{\rm{MS}}{{\rm{E}}_c} \ge \varepsilon$. Then, we obtain $\mathop {\lim }\limits_{N \to \infty } \left( {{\rm{MSE}}_o^{{\rm{lb}}}/{\rm{MS}}{{\rm{E}}_c}} \right) = 1$, which directly leads to \eqref{limit_value}.
\end{proof}
\end{thm}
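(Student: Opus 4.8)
The plan is to lower-bound $\mathrm{MSE}_o$ by a quantity that depends on the channel only through the weakest effective gain $|\gamma_1|^2$, divide by $\mathrm{MSE}_c$, and then substitute the asymptotic gain scaling from Theorem~1 to isolate $N$. First I would drop the nonnegative signal-misalignment sum in \eqref{MSE_o}, which immediately yields $\mathrm{MSE}_o \ge \sigma^2/\eta^*$. Because $\eta^*=\min_k \tilde\eta_k \le \tilde\eta_1$ by \eqref{optimal_eta}, this in turn gives the clean bound $\mathrm{MSE}_o \ge \sigma^2/\tilde\eta_1$, where $\tilde\eta_1$ is read directly from \eqref{eta_k}; I denote the right-hand side $\mathrm{MSE}_o^{\mathrm{lb}}$. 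Dividing by $\mathrm{MSE}_c=\sigma^2/\big(P_{\max}|\gamma_1|^2\big)$ and cancelling then collapses the ratio into the single squared fraction $\big(P_{\max}|\gamma_1|^2/(\sigma^2+P_{\max}|\gamma_1|^2)\big)^2$, which is monotone increasing in $|\gamma_1|^2$.

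Next I would enforce $\mathrm{MSE}_o^{\mathrm{lb}}/\mathrm{MSE}_c \ge \varepsilon$. Taking square roots turns this into a linear inequality in $P_{\max}|\gamma_1|^2$ whose solution is $|\gamma_1|^2 \ge \varepsilon^{1/2}\sigma^2/\big(P_{\max}(1-\varepsilon^{1/2})\big)$. I would then import the asymptotic minimum gain from the proof of Theorem~1: in the purely LoS regime with $L=2$ the array gain \eqref{approximate_array_gain} is common to all devices, so $|\gamma_1|^2$ scales as $2\rho_1\rho_{r,1}\,\mathrm{sinc}^2(1/2)\,MN^2/(\pi K)$ with $\rho_{r,1}=\min_k \rho_{r,k}$. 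Plugging this in and solving for $N$ produces exactly the threshold \eqref{required_N}; since $\mathrm{MSE}_o^{\mathrm{lb}}$ lower-bounds the true $\mathrm{MSE}_o$, the resulting condition is sufficient rather than necessary.

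For the limit \eqref{limit_value} I would sandwich the ratio. Channel inversion with $\eta_c=P_{\max}|\gamma_1|^2$ is feasible (each $p_k=P_{\max}|\gamma_1|^2/|\gamma_k|^2\le P_{\max}$ under the ordering $|\gamma_1|^2\le\cdots\le|\gamma_K|^2$), so optimality forces $\mathrm{MSE}_o\le\mathrm{MSE}_c$ and hence $\mathrm{MSE}_o/\mathrm{MSE}_c\le 1$; meanwhile $|\gamma_1|^2\to\infty$ as $N\to\infty$ drives the squared fraction, and thus $\mathrm{MSE}_o^{\mathrm{lb}}/\mathrm{MSE}_c$, to one. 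Squeezing between these two bounds gives \eqref{limit_value}. I expect the main obstacle to be neither inequality step but the justification of the imported scaling: the bound is genuinely useful only because the LoS and $L=2$ assumptions make every device share the identical asymptotic array gain, so that the ordering of $\{|\gamma_k|^2\}$ is dictated solely by the path-loss factors $\rho_{r,k}$ and the weakest gain $|\gamma_1|^2$ inherits the full $\Theta(MN^2/K)$ growth; establishing this carefully is where the assumptions of the theorem do the real work.
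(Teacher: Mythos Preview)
Your proposal is correct and follows essentially the same route as the paper: drop the nonnegative misalignment sum, use $\eta^*\le\tilde\eta_1$ to get $\mathrm{MSE}_o\ge\sigma^2/\tilde\eta_1$, form the ratio $\big(P_{\max}|\gamma_1|^2/(\sigma^2+P_{\max}|\gamma_1|^2)\big)^2$, and substitute the $|\gamma_1|^2\cong 2\rho_1\rho_{r,1}\,\mathrm{sinc}^2(1/2)\,MN^2/(\pi K)$ scaling from Theorem~1 to solve for $N$. Your limit argument is in fact slightly more complete than the paper's, since you explicitly supply the upper bound $\mathrm{MSE}_o\le\mathrm{MSE}_c$ (from feasibility of channel inversion) needed to close the squeeze, whereas the paper simply asserts that $\mathrm{MSE}_o^{\mathrm{lb}}/\mathrm{MSE}_c\to 1$ ``directly leads to'' \eqref{limit_value}.
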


Theorem 2 implies that the MSE of the \emph{channel-inversion power control} is able to approach that of the optimal power control as $N$ increases, which demonstrates its asymptotical optimality for large $N$. Note that performing channel-inversion power control only requires the device with the worst channel condition to transmit with its peak power, which is helpful for reducing the power consumption of energy-limited IoT devices.

\section{Numerical results}
In our simulations, the AP and IRS are respectively located at $\left( {0,0,0} \right)$ meter (m) and $\left( {0,0,10} \right)$ m. The devices are uniformly distributed in a circle within a radius of 20 m centered at $\left( {200,0,0} \right)$ m. The pathloss exponents for both the AP-IRS and IRS-devices links are set to 2.2, while those for the direct AP-devices links are set to 3.8. Furthermore, we set the Rician factor and the signal attenuation at the reference distance of 1 m  as 10 dB and 30 dB, respectively.
Other system parameters are set as follows: $M = 10$, ${P_{\max }} = 20$ dBm, and ${\sigma ^2} =  - 80$ dBm.

To verify the scaling law of the MSE unveiled in Theorem 1, we first consider the special case that purely LoS channels exist and the direct AP-device links are blocked. It is observed from Fig. 2 that the closed-form expression \eqref{upper_bound_MSE} serves as an upper bound of the MSE achieved by the proposed design. Moreover, the upper bound becomes tighter as $N$ increases and thus validates the scaling law of the MSE as expected in Theorem 1, which again emphasises the ability of using the IRS to reduce the resulting MSE.

Then, we focus on the general Rician fading case to evaluate the performance of the multi-timescale design. For illustration, the results are obtained by implementing Monte Carlo simulations with ${10^4}$ realizations. We consider the following schemes for comparison: 1) ''Optimal PC w/ IRS`` where the proposed design is conducted; 2) ``Channel-inversion PC w/ IRS'' where channel-inversion power control is performed under the optimized IRS phase-shifts; 3) ``Optimal PC w/o IRS'' where optimal power control is performed without IRS; 4) ``Channel-inversion PC w/o IRS'' where channel-inversion power control is performed without IRS; 5) ``Fixed IRS phase-shifts'' where ${\theta _n} = 0,\forall n$, is set with the optimal power control.

In Fig. 3, we plot the MSE versus the number of IRS elements. It is observed that the proposed schemes with optimized IRS phase-shifts significantly outperform other benchmark schemes and the performance gain becomes more pronounced as $N$ increases. Moreover, the optimal power control considerably outperforms the channel-inversion power control in the absence of the IRS, whereas the corresponding performance loss becomes negligible with the deployed IRS especially for a large $N$. It is expected since the passive beamforming gain attained by the IRS helps compensate the wireless fading and thus effectively creates a high signal-to-noise ratio (SNR) region, which verifies the asymptotical optimality of the channel-inversion power control in Theorem 2.

We define $\tilde k$ in \eqref{optimal_power} as the critical number, which indicates the number of devices to perform full power transmission. To investigate the impact of the IRS on the power control policy, we plot the critical number versus $N$ in Fig. 4. One can observe that the critical number of the optimal power control decreases as $N$ increases and approaches that of the channel-inversion power control. It suggests that deploying the IRS into AirComp introduces a favorable ``double-gain'' as it not only lowers the MSE but also reduces the power consumption for devices.
\vspace{-2pt}
\section{Conclusion}
This paper proposed a novel multi-timescale transmission protocol for IRS aided AirComp to lower the signalling and computational overhead. Theoretical analysis further unveiled that the achieved MSE scales on the order of ${\cal O}\left( {K/\left( {{N^2}M} \right)} \right)$ and the asymptotical optimality of the channel-inversion power control. Finally, simulation results verified the effectiveness of the proposed design and also revealed the favorable ``double-gain'' of the IRS to AirComp, thereby rendering IRS aided AirComp an ultra-reliable and power-efficiency architecture.


\bibliographystyle{IEEEtran}

\bibliography{IEEEabrv,myref}


\end{document}